\documentclass[aps,prl,reprint,10pt,raggedbottom,superscriptaddress,nofootinbib]{revtex4-2}
\usepackage{amsmath,amssymb,amsthm}
\usepackage{graphicx}
\usepackage{microtype}
\usepackage[colorlinks=true,linkcolor=blue,citecolor=blue,urlcolor=blue]{hyperref}
\hypersetup{pdftitle={Measurement incompatibility and noiseless quantum codes},pdfauthor={Yu-Xuan Zhang and Jing-Ling Chen}}
\makeatletter
\let\auto@bib@innerbib\@empty
\makeatother
\newtheorem{theorem}{Theorem}
\newtheorem{corollary}{Corollary}

\DeclareMathOperator{\Tr}{Tr}

\newcommand{\id}{\mathbb I}
\newcommand{\C}{\mathbb C}

\newcommand{\N}{\mathcal N}
\newcommand{\ket}[1]{\lvert #1\rangle}
\newcommand{\bra}[1]{\langle #1\rvert}
\newcommand{\proj}[1]{\ket{#1}\bra{#1}}
\newcommand{\norm}[1]{\lVert #1\rVert}

\begin{document}
\title{Measurement incompatibility and noiseless quantum codes}
\author{Yu-Xuan Zhang}
\affiliation{School of Physics, Nankai University, Tianjin 300071, People's Republic of China}
\author{Jing-Ling Chen}
\email{chenjl@nankai.edu.cn}
\affiliation{Theoretical Physics Division, Chern Institute of Mathematics, Nankai University, Tianjin 300071, People's Republic of China}
\date{October 27, 2026}
\begin{abstract}
We construct a rank-one minimal informationally complete measurement (MIC) with an invertible dual operator and prove that, for arbitrary finite input and output dimensions, a channel universally preserves measurement incompatibility if and only if its input space contains a noiseless code subspace of the full output dimension. Preserving the compatibility threshold of a finite family of binary measurements already guarantees such a code. Furthermore, for every $d\ge2$, there exists a family of binary measurements that is incompatible on the full space but compatible after compression to any proper subspace. These results resolve the conjecture of Chitambar and Zhang and the existence question posed by Uola and coauthors. We also give an explicit construction of exact encodings and show that, for a sufficiently small shift in the compatibility threshold, the transmission error of an approximate encoding is bounded by a constant times the square of the shift. These results establish a connection between measurement compatibility, quantum steering, and quantum error correction.
\end{abstract}
\maketitle
\raggedbottom
\clubpenalty=10000
\widowpenalty=10000
\brokenpenalty=10000

\emph{Introduction.---}Quantum steering traces its origins to the discussion of the completeness of quantum mechanics by Einstein, Podolsky, and Rosen~\cite{EPR1935}. Schr\"odinger~\cite{Schrodinger1935} observed that choosing different measurements on one part of an entangled system can prepare different ensembles of conditional states on the other part without changing its average state~\cite{Hughston1993}. For a general shared state, these conditional states may admit an explanation in terms of a fixed ensemble of pre-existing local quantum states and classical probabilities~\cite{Cavalcanti2017,Uola2020}. When no such local hidden-state model exists, the measurements of one party demonstrate quantum steering of the other~\cite{Wiseman2007,Jones2007}.

Measurement compatibility concerns whether a set of measurements can be implemented jointly~\cite{Busch1986,Heinosaari2016,Guhne2023}. A general quantum measurement is described by positive operators that sum to the identity, called measurement effects. A family of measurements is compatible if each measurement can be obtained from a common parent measurement by classical processing of its outcomes; otherwise, the family is incompatible. For a maximally entangled state, a measurement on one subsystem prepares an unnormalized conditional state on the other equal to the transpose of the corresponding measurement effect divided by the dimension. The effects of a parent measurement and the unnormalized hidden states in a local hidden-state model can therefore be converted into one another: the measurement family is compatible if and only if the corresponding conditional-state assemblage is unsteerable. Quintino, V\'ertesi, and Brunner~\cite{Quintino2014} and Uola, Moroder, and G\"uhne~\cite{Uola2014} established this connection between measurement incompatibility and steering. Uola et al.~\cite{Uola2015} subsequently extended the correspondence to arbitrary conditional-state assemblages through normalization with the marginal state.

When a quantum state passes through a channel before being measured, the measurement can equivalently be pulled back to the input. This transformation is the channel's Heisenberg adjoint map. It preserves positivity and the identity operator, so it maps parent measurements to parent measurements and hence preserves compatibility. Incompatibility, however, may disappear under this transformation~\cite{HeinosaariKiukasReitzner2015}. In particular, if every finite incompatible measurement family remains incompatible, we say that the channel universally preserves measurement incompatibility. In terms of steering, this property is equivalent to requiring that every finite incompatible measurement family used by one party can steer the other party in the same shared state associated with the channel~\cite{Kiukas2017}.

Channels that universally preserve or completely destroy incompatibility have attracted sustained interest~\cite{Heinosaari2015,ChitambarZhang}. Heinosaari et al.~\cite{Heinosaari2015} proved the existence of channels that destroy all measurement incompatibility. Uola et al.~\cite{Uola2021} found that compression to a subspace can destroy incompatibility. They constructed a family of measurements in dimension three that is incompatible on the full space but compatible on every two-dimensional subspace. This led them to ask a general question: does every dimension admit a single measurement family whose incompatibility disappears in every proper subspace? For maps that universally preserve incompatibility, Chitambar and Zhang~\cite{ChitambarZhang} used classification results for maps preserving compatibility relations~\cite{Geher2020,Mori2021} to prove that the corresponding channel must be unitary when its input and output dimensions are equal. When the input dimension exceeds the output dimension, they conjectured that the adjoint map must contain a direct-sum block on which it acts as the identity. Such a block corresponds exactly to a noiseless code subspace: every state on the output space can be encoded into the channel input using the same encoding and emerge unchanged. This suggests a connection between universal preservation of incompatibility and quantum error correction~\cite{Shor1995,Steane1996,KnillLaflamme1997}.

We prove the conjecture of Chitambar and Zhang: for arbitrary finite input and output dimensions, a channel universally preserves measurement incompatibility if and only if its input space contains a noiseless code subspace of the same dimension as its output space. We construct a finite family of measurements, each with only two outcomes. Keeping all but one of the measurements fixed, we vary the remaining measurement continuously so that the family changes from jointly measurable to incompatible. We determine the exact threshold for this transition and prove that any channel that leaves this threshold unchanged must admit such a noiseless encoding.

The same construction answers the existence question posed by Uola et al.: for every $d\geq2$, there is a family of two-outcome measurements that is incompatible on the full space but compatible on every proper subspace. We also give an explicit encoding and discuss its stability bound. If the compatibility threshold shifts by a sufficiently small amount, the transmission error after encoding is at most a constant times the square of this shift.

We first state the main theorem. We then construct the parent measurement and the associated binary measurement family, and prove that preservation of its compatibility threshold implies a noiseless encoding. Finally, we present the subspace result and the stability corollary, and discuss their physical implications.

\emph{Steering and incompatibility preservation.---}
For a bipartite state, a measurement on one subsystem prepares a conditional state on the other, depending on the outcome. Let $j$ label the measurement and $a$ its outcome, and denote the conditional state weighted by its outcome probability by $\sigma_{a|j}$. These conditional states sum to the reduced state $\rho_B$ of the steered party, independently of the measurement choice. If they admit a local hidden-state model,
\begin{equation}
 \sigma_{a|j}=\sum_\lambda p(a|j,\lambda)\tau_\lambda, \qquad \sum_\lambda\tau_\lambda=\rho_B,
\end{equation}
the family of conditional states is called unsteerable; otherwise, these measurements demonstrate quantum steering. Here $\operatorname{Tr}\tau_\lambda$ is the weight of each hidden state.

On the support of $\rho_B$, we normalize this decomposition by multiplying each operator from the left and right by $\rho_B^{-1/2}$. Denoting the normalized conditional states and hidden states by $E_{a|j}$ and $G_\lambda$, respectively, we obtain
\begin{equation}
 E_{a|j}=\sum_\lambda p(a|j,\lambda)G_\lambda, \qquad \sum_\lambda G_\lambda=I.
\end{equation}
Here $\{G_\lambda\}$ is a measurement, and each measurement $\{E_{a|j}\}_a$ can be obtained from it by classical postprocessing. This is precisely the definition of joint measurability, or compatibility, and $\{G_\lambda\}$ is called a parent measurement. Since this normalization is reversible, the family of conditional states is unsteerable if and only if the normalized measurement family is jointly measurable~\cite{Uola2015}.

Let $\N$ be a quantum channel from a $D$-dimensional system to a $d$-dimensional system, and write $\Lambda=\N^*$ for its Heisenberg dual. For an input state $\rho$ and an output operator $X$,
\begin{equation}
 \operatorname{Tr}[\N(\rho)X]=\operatorname{Tr}[\rho\Lambda(X)].
\end{equation}
Thus, sending a state through the channel and then measuring its output is equivalent to applying the transformed measurement directly to the input: the output effect $M_{a|j}$ becomes the input effect $\Lambda(M_{a|j})$. Since $\Lambda$ is completely positive and unital, applying it to each effect of a parent measurement yields a parent measurement for the transformed family. Compatibility is therefore always preserved. If every finite family of incompatible measurements also remains incompatible under this transformation, we say that $\N$ universally preserves measurement incompatibility.

For every channel, state--channel duality gives a fixed bipartite state such that the normalized conditional states generated by arbitrary measurements $M_{a|j}$ on the steering party satisfy $E_{a|j}^{\mathsf T}=\Lambda(M_{a|j})$~\cite{Kiukas2017}. Since transposition does not change joint measurability, the equivalence established above directly implies that $\{\Lambda(M_{a|j})\}$ is incompatible if and only if the original measurements demonstrate steering on this bipartite state. Universal preservation of incompatibility is therefore equivalent to requiring every finite family of incompatible measurements to demonstrate steering on the same fixed bipartite state. We begin by stating our main theorem.
\begin{theorem}
\label{thm:main}
The channel $\N$ universally preserves measurement incompatibility if and only if there exists an isometry $W:\C^d\to\C^D$ such that
\begin{equation}
 \N(W\rho W^\dagger)=\rho \quad\text{for every state }\rho\text{ on }\C^d.
 \label{eq:code}
\end{equation}
\end{theorem}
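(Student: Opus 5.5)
The easy direction comes first. Suppose an isometry $W$ satisfies \eqref{eq:code}. Then $W^\dagger\Lambda(X)W=X$ for every output operator $X$, so the compression $\Phi(Y)=W^\dagger YW$ is a unital completely positive map and a left inverse of $\Lambda$ on effects. Assume $\{\Lambda(M_{a|j})\}$ admits a parent measurement $\{G_\lambda\}$. Then $\{W^\dagger G_\lambda W\}$ is a parent measurement for $\{W^\dagger\Lambda(M_{a|j})W\}=\{M_{a|j}\}$. Hence compatibility of the transformed family forces compatibility of the original one, which means $\N$ universally preserves incompatibility.

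For the converse I plan to follow the route announced in the introduction. The first step is to construct a rank-one MIC $\{P_k=\alpha_k\proj{\psi_k}\}_{k=1}^{d^2}$ on $\C^d$ whose dual (frame) operator is invertible. From it I build a finite family of binary measurements. For each $k$, take $\{P_k,\id-P_k\}$, possibly together with a few auxiliary fixed binary effects. Add one extra binary measurement $\{tA,\id-tA\}$, where $A$ is chosen generic, full rank, and spanned by the MIC elements. The continuous parameter $t$ drives the family from compatible to incompatible, and the next task is to compute the exact threshold $t^*$.

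The structural fact I expect to hold is that any joint measurement of the $P_k$'s must essentially be a postprocessing of the rank-one MIC itself. This should follow from rank-one extremality: a rank-one effect $\alpha\proj{\psi}$ can only be refined into multiples of $\proj{\psi}$, and minimality with informational completeness fixes the coefficients. The threshold $t^*$ is then given by an explicit linear-programming-type condition, namely the largest $t$ for which $tA$ is a positive combination $\sum_k c_kP_k$ with $0\le c_k\le1$. I would arrange $A$ so that this constraint is tight at $t^*$.

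Next comes the threshold-preservation step. If $\N$ universally preserves incompatibility, then $\{\Lambda(P_k)\}$ together with $\Lambda(tA)$ is incompatible for every $t>t^*$, since preservation of incompatibility gives exactly this. In the input space, the effects $\Lambda(P_k)$ are no longer rank one, so joint measurements can have many more parents. Preservation of the threshold therefore forces the operators $\Lambda(\proj{\psi_k})$ to behave like rank-one projectors when restricted appropriately. I would attempt to show that each $\Lambda(P_k)$ has largest eigenvalue equal to $\alpha_k$ (i.e.\ norm preserved). Using the invertible dual operator, the norm-attaining eigenvectors $\ket{\phi_k}$ should then satisfy the same Gram relations as the $\ket{\psi_k}$. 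That lets me define $W$ by $W\ket{\psi_k}=\ket{\phi_k}$ and extend it linearly. Isometry follows from the matching Gram matrix, and $W^\dagger\Lambda(X)W=X$ holds on the spanning set $\{P_k\}$. For a unital CP $\Lambda$, this identity is equivalent to \eqref{eq:code}, by duality $\Tr[\N(W\rho W^\dagger)X]=\Tr[\rho W^\dagger\Lambda(X)W]$.

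I expect the main obstacle to be this threshold-preservation step. Concretely, I need to extract from the single scalar equality $t^*_\Lambda=t^*$ the full rigidity statement: norm preservation for every $P_k$, plus consistent phases and overlaps of the eigenvectors. My first attempt would be a duality/witness argument. I would construct an explicit incompatibility witness for the original family at $t$ slightly above $t^*$, pull it back, and show that its value under $\Lambda$ is a convex combination of terms that all attain the bound only when every $\Lambda(P_k)$ has a common-structure top eigenvector. The perturbation bound mentioned in the abstract suggests that the deficit is quadratic, which is consistent with a second-order (eigenvalue-gap) argument of this kind.
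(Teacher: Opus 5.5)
Your sufficiency argument is the same as the paper's. The necessity direction has a genuine gap. The ``threshold-preservation step'' you flag as the main obstacle is in fact the whole proof, and the mechanism you sketch for it does not close.

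\textbf{The witness strategy points the wrong way.} The hypothesis only says that the image family $\{\Lambda(P_k)\}\cup\{\Lambda(tA)\}$ has no parent for $t>t^*$, i.e.\ that \emph{some} witness on $\C^D$ detects it. It does not say that the witness you built for the original family, transported to the input, is one of those. So showing that your particular witness fails whenever some eigenvector structure is absent yields no contradiction. The usable form of the hypothesis is the contrapositive: if no code exists, \emph{exhibit a parent} for the image family at some $t>t^*$.

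\textbf{The rigidity target is ill-posed.} Take $\Lambda(X)=X\otimes\id_m$, a partial trace, which does admit a code. The top eigenspace of $\Lambda(P_k)$ is $\ket{\psi_k}\otimes\C^m$. Independently chosen norm-attaining vectors $\ket{\psi_k}\otimes\ket{f_k}$ have Gram matrix $\langle\psi_j|\psi_k\rangle\langle f_j|f_k\rangle$, not $\langle\psi_j|\psi_k\rangle$. Even with nondegenerate top eigenvalues, you would still have to show that all overlaps match, moduli and phases, after a consistent choice of $d^2$ phases. A coherent choice of the $\ket{\phi_k}$ across all $k$ is essentially the existence of a single code subspace. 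It cannot serve as an intermediate lemma without a selection principle, and none is offered.

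\textbf{The missing idea is how the paper builds that parent.} Take a Stinespring isometry $V$ and a MIC whose dual-basis element $F_0$, defined by $\Tr(F_0G_i)=\delta_{0i}$, is invertible. This is not the frame operator, which is invertible for every MIC. Invertibility of $F_0$ is a real requirement, which the paper ensures via $F_0=T^{1/2}(2P_0-\id)T^{1/2}$. It makes $\{F_0G_i\}$ span all operators, so one can write $\id-VV^\dagger=\sum_iF_0G_i\otimes A_i$.

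Removing $F_0^{-1}$ and compressing by $V$ gives a null combination $\sum_iV^\dagger(G_i\otimes C_i)V=0$. Taking the system partial trace gives $C_0=\Tr_{\rm sys}(\id-VV^\dagger)\ge0$. In the paper's family $B_s=(\id-sG_0)/2$, the only tight splitting coefficient at $s=1$ is the zeroth one, which equals $0$; all others equal $\tfrac12$. The argument then splits into two cases:
\begin{itemize}
\item If $C_0>0$, adding a small multiple of the $C_i$ to the environment coefficients of the splitting of $\Lambda(B_1)$ leaves room to push $s$ past $1$. This gives a parent for the image family, contradicting the hypothesis.
\item Otherwise a kernel vector $\ket e$ of $C_0$ yields the code $W\ket\psi=V^\dagger(\ket\psi\otimes\ket e)$.
\end{itemize}
That single environment vector $\ket e$ is exactly the global coherent choice your eigenvector matching lacks.

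Your $tA$ family could be adapted to this argument. Its tight constraint sits at the upper bound $1$, so you would subtract a multiple of $C_{k^*}$ instead. That requires the dual element for index $k^*$ to be invertible, and all other coefficients of $t^*A$ to lie strictly inside $(0,1)$.
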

Here $W$ encodes a quantum state into the input space, and its range is the code subspace. The same encoding works for every $\rho$, and the channel directly outputs the original state without any further recovery.

\emph{Proof of the main theorem.---}
We first prove sufficiency. Equation~\eqref{eq:code} gives $W^\dagger\Lambda(X)W=X$ for every output operator $X$. Compressing each effect of a parent for the image measurements to $W^\dagger JW$ therefore gives a parent for the original measurements. This proves sufficiency.

Now suppose that $\N$ universally preserves measurement incompatibility. We prove that the required encoding exists. The case $d=1$ is immediate, so assume $d\ge2$.

We first construct a rank-one minimal informationally complete measurement (MIC): its $d^2$ effects $G_i$ are linearly independent, and the Hermitian operator $F_0$ satisfying $\Tr(F_0G_i)=\delta_{0i}$ is invertible.

Fix a unit vector $\ket x$, and set $P_0=\proj x$ and $Q=2P_0-\id_d$. For every unit vector $\ket u$ orthogonal to $\ket x$, take the rank-one positive operator
\begin{equation}
 P_u=(\ket x+\ket u)(\bra x+\bra u).
 \label{eq:vectors}
\end{equation}
These operators span the orthogonal complement of $Q$ with respect to the trace inner product. Indeed, if a Hermitian operator $A$ satisfies $\Tr(AP_u)=0$ for every such $\ket u$, varying the phase of $\ket u$ gives $\bra xA\ket u=0$. The remaining condition is $\bra uA\ket u=-\bra xA\ket x$, so $A=\bra xA\ket x\,Q$.

Select $d^2-1$ linearly independent operators from this family and denote them by $P_1,\ldots,P_{d^2-1}$. Since $\Tr(QP_0)=1$, adjoining $P_0$ gives a basis of all Hermitian operators. Thus $T=\sum_iP_i>0$, and we may define
\begin{equation}
 G_i=T^{-1/2}P_iT^{-1/2},\qquad
 F_0=T^{1/2}QT^{1/2}.
 \label{eq:mic}
\end{equation}
The effects $G_i$ have rank one, are linearly independent, and sum to $\id_d$, while $\Tr(F_0G_i)=\Tr(QP_i)=\delta_{0i}$. The invertibility of $F_0$ follows directly from that of $T$ and $Q$.

Moreover, $G_0$ cannot be a projection: otherwise the remaining effects would be supported on its orthogonal complement, so all the $G_i$ would have zero matrix elements between the two subspaces. This contradicts the fact that they span all Hermitian operators. Hence $s_{\max}:=\lambda_{\max}(G_0)^{-1}>1$. Fix the binary measurements $\{G_i,\id_d-G_i\}$ for $i\ge1$, and add $\{B_s,\id_d-B_s\}$, where
\begin{equation}
B_s=\frac{\id_d-sG_0}{2},\qquad 0\le s\le s_{\max}.
\label{eq:family}
\end{equation}

Consider a joint parent whose outcomes are labelled by outcome strings for the fixed measurements. A joint effect corresponding to two first outcomes is bounded above by two rank-one effects with different supports and must therefore vanish. The marginal relations and normalization then fix the remaining joint effects as $G_i$ and $G_0$, respectively. Adding $B_s$ consequently amounts to splitting each $G_i$ into two positive parts. Since $G_i$ has rank one, each part must be a scalar multiple of $G_i$~\cite{Pellonpaa2014}. Thus the splitting is possible if and only if all coefficients of $B_s$ in the MIC basis belong to $[0,1]$. These coefficients are $(1-s)/2$ and $1/2$, so the full family is compatible if and only if $s\le1$.

Take a Stinespring isometry $V:\C^D\to\C^d\otimes\C^r$ for the channel~\cite{Stinespring1955}, so that
\[
 \Lambda(X)=V^\dagger(X\otimes\id_r)V.
\]
Replacing the identity in $G_i\otimes\id_r$ by any environment effect and then compressing gives a positive part of $\Lambda(G_i)$. Thus splittings of the image parent can be adjusted through their environment coefficients.

To make such an adjustment without changing the image effect, use the invertibility of $F_0$ to expand the operator whose compression vanishes:
\begin{equation}
 \id-VV^\dagger=\sum_iF_0G_i\otimes A_i.
 \label{eq:complement-expansion}
\end{equation}
Left multiplication by $F_0^{-1}\otimes\id_r$, followed by compression with $V$, gives zero because $(\id-VV^\dagger)V=0$. Taking the system partial trace of Eq.~\eqref{eq:complement-expansion} itself gives $A_0=\Tr_{\rm sys}(\id-VV^\dagger)$, since $\Tr(F_0G_i)=\delta_{0i}$. Taking Hermitian parts and setting $C_i=(A_i+A_i^\dagger)/2$, we obtain
\begin{equation}
 \sum_iV^\dagger(G_i\otimes C_i)V=0,\qquad C_0=\Tr_{\rm sys}(\id-VV^\dagger)\ge0.
 \label{eq:coefficient}
\end{equation}

At $s=1$, the splitting coefficients of $\Lambda(B_1)$ are $0$ and $\id_r/2$. Equation~\eqref{eq:coefficient} allows us to add $tC_i$ to the respective coefficients without changing the total effect. If $C_0>0$, choosing $t>0$ sufficiently small makes every coefficient lie strictly between $0$ and $\id_r$. Subtracting $(s-1)\id_r/2$ from the zeroth coefficient then gives a splitting of $\Lambda(B_s)$. For $s>1$ sufficiently close to $1$, all these coefficients remain valid effects. The image family is therefore compatible while the original family is incompatible, contradicting the assumption.

Thus $C_0$ has a unit vector $\ket e$ in its kernel. Since $\id-VV^\dagger\ge0$, the equality $\bra eC_0\ket e=0$ expresses a sum of nonnegative expectations over a system basis. It follows that $\id-VV^\dagger$ annihilates every $\ket\psi\otimes\ket e$. Define $W\ket\psi=V^\dagger(\ket\psi\otimes\ket e)$. Then $VW\ket\psi=\ket\psi\otimes\ket e$, so $W$ is an isometry and $W^\dagger\Lambda(X)W=X$. Trace duality gives Eq.~\eqref{eq:code}, completing the proof of the main theorem.

\emph{Incompatibility in proper subspaces.---}
Using the measurement family from the main proof, we obtain the following corollary, which answers the existence question of Uola and coauthors~\cite{Uola2021}.

\begin{corollary}
\label{cor:uola}
For every $d\ge2$, there exists a family of $d^2$ binary measurements that is incompatible on $\C^d$ and jointly measurable after compression to any proper subspace.
\end{corollary}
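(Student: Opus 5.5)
The plan is to use the family from the main proof at a parameter just above threshold: the $d^2-1$ binary measurements $\{G_i,\id_d-G_i\}$, $i\ge1$, together with $\{B_s,\id_d-B_s\}$ for some fixed $s\in(1,s_{\max}]$. For $s>1$ the main proof already shows this family is incompatible on $\C^d$, and it has exactly $d^2$ binary measurements. The work is to show that compressing to any proper subspace $\mathcal K$, with isometry $W:\C^k\to\C^d$ and $k<d$, yields a compatible family. I will choose $s$ close enough to $1$, uniformly over all proper subspaces.

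The compression $X\mapsto W^\dagger XW$ is the Heisenberg dual of the channel $\rho\mapsto W\rho W^\dagger$. This channel goes from $\C^k$ to $\C^d$, so in the notation of the main proof the input dimension is $D=k$ and the output dimension is $d$. The Stinespring isometry can be taken to be $V=W$ with trivial environment, $r=1$. Then $\id-VV^\dagger=\id_d-WW^\dagger$ is the projection onto $\mathcal K^\perp$. Equation~\eqref{eq:coefficient} applies, and its scalar coefficient is $C_0=\Tr(\id_d-WW^\dagger)=d-k\ge1>0$.

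I then repeat the splitting argument of the main proof verbatim. At $s=1$ the splitting coefficients of $B_1$ are $0$ on $G_0$ and $1/2$ on $G_i$ for $i\ge1$. Adding $tC_i$ leaves the compressed effect $W^\dagger B_1W$ unchanged. For small $t$ all coefficients lie strictly inside $(0,1)$; in particular the zeroth coefficient is $t(d-k)>0$. Subtracting $(s-1)/2$ from the zeroth coefficient then gives a valid splitting of $W^\dagger B_sW$ into positive parts of the compressed $W^\dagger G_iW$. Together with the compressed joint effects $W^\dagger G_iW$, this defines a parent for the compressed family, whose remaining marginals are the compressed $\{G_i,\id-G_i\}$. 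So the compressed family is compatible whenever $(s-1)/2\le t(d-k)$ and $t$ is admissible.

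The main obstacle is uniformity: one $s>1$ must work for every proper subspace, whereas the main proof only gives a subspace-dependent $s$. To handle this, I will bound the coefficients $C_i$ uniformly. Since $F_0$ is invertible, the expansion~\eqref{eq:complement-expansion} is a fixed linear map applied to the projection $\id-WW^\dagger$, which has norm at most $\sqrt d$ (Hilbert--Schmidt) and is taken over a compact set of projections. Hence $|C_i|\le K$ for a constant $K$ depending only on the MIC. Choosing $t=1/(4K)$ keeps every coefficient in $[0,1]$: the $i\ge1$ coefficients stay in $[1/4,3/4]$, and the zeroth coefficient equals $t(d-k)\ge t$. It then suffices to fix $s=\min\{1+2t,\,s_{\max}\}$, which is strictly greater than $1$ because $s_{\max}>1$.

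Finally, I will check two details. First, compressions to smaller subspaces are covered because every proper subspace is handled directly; alternatively, compatibility passes to further compressions. Second, the compressed family is a genuine set of binary measurements on $\mathcal K$, possibly with some trivial effects, which does not affect compatibility.
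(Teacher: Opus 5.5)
Your proposal is correct and follows the paper's own argument. You take $V$ to be the inclusion of the subspace, so that $C_0=\Tr(\id_d-\Pi)\ge1$; you bound the scalar coefficients $C_i$ uniformly over all proper subspaces; and you pick one $t>0$ that keeps every coefficient of the perturbed splitting of the compressed added effect in $[0,1]$. Your choice $s=1+2t$ instead of the paper's $1+t$ is only a reparametrization, and your uniform bound, via linearity of the expansion and $\norm{\id_d-\Pi}_{\mathrm{HS}}\le\sqrt d$, is a slightly more explicit version of the paper's continuity-plus-compactness step.
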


Let $\Pi$ be the orthogonal projection onto a proper subspace. Apply the construction in the main proof with $V$ the inclusion of this subspace. The environment is then one-dimensional, and $VV^\dagger=\Pi$. The coefficients in Eq.~\eqref{eq:coefficient} become real numbers, denoted by $C_i(\Pi)$, with $C_0(\Pi)=\Tr(\id_d-\Pi)\ge1$. Using the vanishing sum in the same equation gives
\begin{equation}
 \begin{aligned}
 \Pi B_{1+t}\Pi
 ={}&t\left(C_0(\Pi)-\frac12\right)\Pi G_0\Pi\\
 &+\sum_{i\ge1}\left(\frac12+tC_i(\Pi)\right)\Pi G_i\Pi.
 \end{aligned}
 \label{eq:compressedparent}
\end{equation}

These coefficients arise from expansion in a fixed basis and depend continuously on $\Pi$. The orthogonal projections onto proper subspaces form a compact set, so all $C_i(\Pi)$ have a common finite bound. Together with $C_0(\Pi)\ge1$, this allows us to choose $t>0$ sufficiently small, independently of $\Pi$, so that every coefficient in Eq.~\eqref{eq:compressedparent} belongs to $[0,1]$ and $1+t<s_{\max}$.

Since $\sum_i\Pi G_i\Pi=\Pi$, the compressed effects form a measurement on the subspace. They realize each fixed measurement, and Eq.~\eqref{eq:compressedparent} shows that they also realize the added measurement by classical postprocessing. Thus the full compressed family is compatible. The original family is incompatible because $1+t>1$, and the choice of $t$ is independent of the subspace. This proves the corollary.

\emph{Encoding and error bounds.---}A small increase in the compatibility threshold also guarantees approximate transmission. We now construct an encoding and bound its transmission error.

Let $\tau_\Lambda$ be the largest $s\in[0,s_{\max}]$ for which the image family is jointly measurable, allowing arbitrary parent measurements, and set $\delta=\tau_\Lambda-1$. Retain $C_0$ from Eq.~\eqref{eq:coefficient} and write $\varepsilon=\lambda_{\min}(C_0)$. The case $\varepsilon=0$ gives the exact encoding in the main proof. We use the diamond norm to measure the maximum trace-norm transmission error, including inputs entangled with a reference system.

\begin{corollary}
\label{cor:stability}
There is a constant $c>0$, depending only on the chosen MIC, such that $\varepsilon\le c^2\delta^2$ whenever $\tau_\Lambda<s_{\max}$. If, in addition, $c^2\delta^2<1$, there is an isometry $W$ such that, with $\mathcal W(\rho)=W\rho W^\dagger$,
\begin{equation}
 \norm{\N\circ\mathcal W-\operatorname{id}_d}_\diamond\le2\varepsilon\le2c^2\delta^2,
 \label{eq:stability}
\end{equation}
where $\operatorname{id}_d$ is the identity channel.
\end{corollary}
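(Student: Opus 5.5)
The plan has two parts. First, prove the bound $\varepsilon\le c^2\delta^2$ as a quantitative form of the perturbation step in the proof of Theorem~\ref{thm:main}. Second, build the encoding from an eigenvector of $C_0$ for its smallest eigenvalue, as in the exact case.

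\emph{First part.} I argue by contraposition. Assume $C_0\ge\varepsilon\,\id_r$. I then exhibit an explicit splitting of $\Lambda(B_s)$ for all $1\le s\le 1+\sqrt\varepsilon/c$, capped by $s_{\max}$. This shows $\tau_\Lambda-1\ge\min(\sqrt\varepsilon/c,\,s_{\max}-1)$, and the hypothesis $\tau_\Lambda<s_{\max}$ rules out the cap.

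As in the main proof, start from the coefficients $0$ (on $G_0$) and $\id_r/2$ (on $G_i$, $i\ge1$) at $s=1$. Add $tC_i$ using the vanishing sum in Eq.~\eqref{eq:coefficient}, and subtract $(s-1)\id_r/2$ from the zeroth coefficient. The constraints are:
\begin{itemize}
\item $0\le \id_r/2+tC_i\le\id_r$ for $i\ge1$, which holds when $t\norm{C_i}\le 1/2$;
\item $tC_0\ge (s-1)\id_r/2$, which holds when $s-1\le 2t\varepsilon$;
\item $tC_0-(s-1)\id_r/2\le\id_r$.
\end{itemize}

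A naive uniform bound $\norm{C_i}\le K$ only yields $\delta\gtrsim\varepsilon/K$, a linear rather than quadratic relation. The key extra input is a bound $\norm{C_i}\le c'\sqrt{\varepsilon}$ for $i\ge1$, possibly after re-choosing the $A_i$ in Eq.~\eqref{eq:complement-expansion}. With it I may take $t\sim 1/(2c'\sqrt\varepsilon)$, which gives $s-1\sim\sqrt\varepsilon/c'$.

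To get this bound I use that $P:=\id-VV^\dagger$ is a projection and that the $A_i$ are obtained linearly from $P$ through the fixed dual frame $\{F_0^{-1}\}$ of the MIC. Write $P$ in block form with respect to the splitting of $\C^d\otimes\C^r$ defined by $C_0$; the part of $C_0$ below a threshold corresponds to nearly exact code directions. Then use $P^2=P$ and $P\ge0$: off-diagonal blocks of a projection are controlled by the square root of its diagonal weight. This should give $\norm{A_i}\lesssim\sqrt{\norm{C_0}}$ up to MIC-dependent constants, or a decomposition in which only the $\sqrt\varepsilon$-small pieces enter the $i\ge1$ coefficients.

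I expect this square-root gain to be the main obstacle. If the projection estimate is not sharp enough, my fallback is to scale $t$ as an environment operator $R$, using weights $RC_iR$, and to re-verify the vanishing compression via Eq.~\eqref{eq:coefficient} applied to the rescaled Stinespring isometry $(\id\otimes R)V$.

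\emph{Second part.} Let $\ket e$ be a unit eigenvector with $C_0\ket e=\varepsilon\ket e$, and define $\tilde W\ket\psi=V^\dagger(\ket\psi\otimes\ket e)$. Summing $\bra{\psi\otimes e}P\ket{\psi\otimes e}$ over a system basis gives $\bra e C_0\ket e=\varepsilon$. Hence $\id_d-\tilde W^\dagger\tilde W=\Tr_{\rm env}[(\id\otimes\proj e)P]$ is positive with trace $\varepsilon$, so $\norm{\id-\tilde W^\dagger\tilde W}\le\varepsilon<1$. That last inequality is where $c^2\delta^2<1$ is used.

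Take $W=\tilde W(\tilde W^\dagger\tilde W)^{-1/2}$, an isometry. Then $VW\ket\psi=(\id-P)(\ket\psi\otimes\ket e)$ after normalization. For a purified input $\ket\Phi$ on $\C^d\otimes\C^d_{\rm ref}$, compare $(\N\circ\mathcal W\otimes\operatorname{id})(\Phi)$ with $\Phi$. Both are obtained by tracing out the environment from $(VW\otimes\id)\ket\Phi$ and from $\ket\Phi\otimes\ket e$, respectively. Their overlap is $1-\bra{\Phi\otimes e}(P\otimes\id)\ket{\Phi\otimes e}\ge1-\varepsilon$, using the operator bound on $\Tr_{\rm env}[(\id\otimes\proj e)P]$. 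The pure-state trace distance formula combined with monotonicity under partial trace then bounds the diamond norm by a function of $\varepsilon$.

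This formula yields $2\sqrt{\varepsilon}$ at first. To reach the stated $2\varepsilon$, I expand $VW$ directly: the error operator $(VW-\id\otimes\ket e)$ is orthogonal to $\C^d\otimes\ket e$ at first order. Its contribution to the reduced output on the system therefore enters only quadratically, and I bound it by $2\varepsilon$ through the triangle inequality. Combined with the first part, this gives Eq.~\eqref{eq:stability}.
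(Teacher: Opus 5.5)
There is a genuine gap in your first part. A scalar perturbation of the original coefficients can never produce the quadratic relation. Your own two constraints on the $G_0$ coefficient $tC_0-(s-1)\id_r/2$ show this. Evaluated on $\ket e$, they force $t\varepsilon\ge(s-1)/2$. Evaluated on a top eigenvector of $C_0$, they force $t\lambda_{\max}(C_0)\le1+(s-1)/2$. Together these give $s-1\le2\varepsilon/[\lambda_{\max}(C_0)-\varepsilon]$, however small the $C_i$ with $i\ge1$ are, and $\lambda_{\max}(C_0)$ is typically of order one.

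Moreover, the key input $\norm{C_i}\le c'\sqrt\varepsilon$ is false and cannot be obtained by re-choosing the $A_i$. The operators $\{F_0G_i\}$ form a basis, so $A_i=\Tr_{\rm sys}[(F_iF_0^{-1}\otimes\id_r)(\id-VV^\dagger)]$ is uniquely determined. For example, take the dephasing channel with $d=r=D=2$ and $V\ket\psi=\sqrt{1-\eta}\,\ket\psi\otimes\ket{e_1}+\sqrt\eta\,\sigma_z\ket\psi\otimes\ket{e_2}$. Then $C_0=\operatorname{diag}(2\eta,2-2\eta)$, so $\varepsilon=2\eta$, while $\bra{e_2}C_i\ket{e_2}=(1-\eta)\Tr(F_iF_0^{-1})$. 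This is of order one for some $i\ge1$, because $F_0^{-1}$ is invertible and hence not a multiple of the rank-one $G_0$.

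Your fallback also fails as stated. Two-sided weights $RC_iR$ do not keep the compressed sum zero, since $(\id-VV^\dagger)(\id\otimes R)V\ne0$ in general. Also, $(\id\otimes R)V$ is not an isometry, so Eq.~\eqref{eq:coefficient} does not apply to it.

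The missing idea is a one-sided, anisotropic normalization in the environment. Because $(\id-VV^\dagger)V=0$, you may multiply Eq.~\eqref{eq:complement-expansion} on the left by $F_0^{-1}\otimes C_0^{-1}$ before compressing. The Hermitian parts $\widehat C_i=\frac12(C_0^{-1}A_i+A_i^\dagger C_0^{-1})$ still have vanishing compressed sum, and $\widehat C_0=\id_r$ because $A_0=C_0$. Assigning coefficient $0$ to $G_0$ and $\frac12(\id_r+t\widehat C_i)$ to $G_i$ then splits $\Lambda(B_{1+t})$, and the constraint coming from the top of $C_0$ disappears.

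The square root enters when bounding $\widehat C_i$. The pinching (Cauchy--Schwarz) bound $\id-VV^\dagger\le d\,\id_d\otimes C_0$ gives $(F_0^{-1}\otimes C_0^{-1})(\id-VV^\dagger)(F_0^{-1}\otimes C_0^{-1})\le\frac{d}{\varepsilon\lambda_{\min}(F_0^2)}\id$. Idempotency of $\id-VV^\dagger$ turns this into the operator-norm bound $\sqrt{d/[\varepsilon\lambda_{\min}(F_0^2)]}$ for $(F_0^{-1}\otimes C_0^{-1})(\id-VV^\dagger)$. A partial trace against $F_i$ then gives $\norm{\widehat C_i}\le c/\sqrt\varepsilon$. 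So your intuition that $P^2=P$ produces a square root is right, but it bounds $C_0^{-1}A_i$ by $\varepsilon^{-1/2}$, not $A_i$ by $\varepsilon^{1/2}$.

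Your second part matches the paper: the encoding $W=LS^{-1/2}$ and the decomposition $\sqrt S\rho\sqrt S+\mathcal R(\rho)$. One caution there. Bounding $\mathcal R(\rho)$ and $\sqrt S\rho\sqrt S-\rho$ separately by the triangle inequality gives only $\varepsilon+2(1-\sqrt{1-\varepsilon})$, which exceeds $2\varepsilon$. The exact constant requires completing the square about $a=\sqrt{1-\varepsilon}$, as in Eq.~\eqref{eq:error-decomposition}.
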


We first bound $\varepsilon$. Let $\{F_i\}$ be the Hermitian dual basis, defined by $\Tr(F_iG_j)=\delta_{ij}$, and choose
\[
 c=\sqrt{\frac{d}{\lambda_{\min}(F_0^2)}}\,\max_{i\ge1}\Tr\sqrt{F_i^2}.
\]
If $\varepsilon=0$, the conclusion follows from the main proof. Assume henceforth that $\varepsilon>0$, and set $K=\id-VV^\dagger$.

Using the coefficients $A_i$ in Eq.~\eqref{eq:complement-expansion}, define
\begin{equation}
 \widehat C_i=\frac12\left(C_0^{-1}A_i+A_i^\dagger C_0^{-1}\right),\qquad\widehat C_0=\id_r.
 \label{eq:normalized-coefficients}
\end{equation}
Multiplying Eq.~\eqref{eq:complement-expansion} on the left by $F_0^{-1}\otimes C_0^{-1}$ and taking the Hermitian part gives these coefficients. Since $KV=0$, they satisfy the same vanishing sum as in Eq.~\eqref{eq:coefficient}.

The diagonal system blocks of $K$ sum to $C_0$. Applying Cauchy--Schwarz to the sum over a system basis therefore gives $K\le d\id_d\otimes C_0$. Multiplying on both sides by $F_0^{-1}\otimes C_0^{-1}$ yields
\[
 \begin{aligned}
 (F_0^{-1}\otimes C_0^{-1})K(F_0^{-1}\otimes C_0^{-1})&\le dF_0^{-2}\otimes C_0^{-1}\\
 &\le\frac{d}{\varepsilon\lambda_{\min}(F_0^2)}\id.
 \end{aligned}
\]
Since $K^2=K$, Cauchy--Schwarz now bounds each matrix element of $(F_0^{-1}\otimes C_0^{-1})K$ between unit vectors by $\sqrt{d/[\varepsilon\lambda_{\min}(F_0^2)]}$. Taking the system partial trace of its Hermitian part against $F_i$, in an eigenbasis of $F_i$, gives
\begin{equation}
 -\frac{c}{\sqrt\varepsilon}\id_r\le\widehat C_i\le\frac{c}{\sqrt\varepsilon}\id_r,\qquad i\ge1.
 \label{eq:coefficient-bound}
\end{equation}

Thus $(\id_r+t\widehat C_i)/2$ is an environment effect whenever $0\le t\le\sqrt\varepsilon/c$. Use these effects as the coefficients for $i\ge1$ in the parent splitting from the main proof, and set the coefficient for $i=0$ to zero. The resulting parts sum to $\Lambda(B_{1+t})$. The image family is therefore compatible whenever $1+t\le s_{\max}$, so
\begin{equation}
 \tau_\Lambda\ge\min\left\{s_{\max},\,1+\frac{\sqrt\varepsilon}{c}\right\}.
 \label{eq:threshold-bound}
\end{equation}
The assumption $\tau_\Lambda<s_{\max}$ then gives $\varepsilon\le c^2\delta^2$.

We next construct the encoding. Take a unit eigenvector $\ket e$ of $C_0$ with eigenvalue $\varepsilon$, and set $L=V^\dagger(\id_d\otimes\ket e)$. The operator $S=L^\dagger L$ satisfies $0\le S\le\id_d$ because $VV^\dagger$ is a projection. Moreover, the definition of $C_0$ gives $\Tr(\id_d-S)=\varepsilon$, so $S\ge(1-\varepsilon)\id_d$. When $\varepsilon<1$,
\begin{equation}
 W=LS^{-1/2}
 \label{eq:approxencoder}
\end{equation}
is an isometry.

The component of $VW$ along the environment vector $\ket e$ is $L^\dagger W=\sqrt S$. Hence the encoded channel has the form $(\N\circ\mathcal W)(\rho)=\sqrt S\rho\sqrt S+\mathcal R(\rho)$, where $\mathcal R$ is completely positive and $\mathcal R^*(\id_d)=\id_d-S$.

To bound the error, set $a=\sqrt{1-\varepsilon}$ and $H=(1+a^2)\id_d-2a\sqrt S$. The bounds $a\id_d\le\sqrt S\le\id_d$ imply $0\le H\le\varepsilon\id_d$. Completing the square gives
\begin{equation}
 \begin{aligned}
 (\N\circ\mathcal W)(\rho)-\rho&=\mathcal R(\rho)-\frac12(H\rho+\rho H)\\
 &\quad+(\sqrt S-a\id_d)\rho(\sqrt S-a\id_d).
 \end{aligned}
 \label{eq:error-decomposition}
\end{equation}
The two positive terms on the right have total trace $\Tr(H\rho)\le\varepsilon$. Cauchy--Schwarz bounds the trace norms of both $H\rho$ and $\rho H$ by $\sqrt{\Tr(H^2\rho)}\le\varepsilon$. Thus the total error is at most $2\varepsilon$. These estimates remain valid with any reference system included, proving Eq.~\eqref{eq:stability} and completing the proof.

\emph{Discussion and outlook.---}
By constructing a MIC with an invertible dual operator, we establish a connection between measurement compatibility and the construction of encodings. The resulting family of binary measurements is compatible precisely when one expansion coefficient is nonnegative. We then use this MIC and the invertible dual operator to expand the projection onto the orthogonal complement of the range of the Stinespring isometry. If the required noiseless encoding does not exist, this expansion gives a splitting of the image parent that makes an originally incompatible family compatible under the channel's adjoint. This contradicts the assumption that the channel preserves incompatibility, so the encoding must exist. The same construction also gives two corollaries: in every dimension $d\ge2$, there is a family of binary measurements that is incompatible on the full space but compatible after compression to any proper subspace; when the compatibility threshold shifts sufficiently little, the encoding's transmission error is bounded by a constant times the square of the shift.

The proof also provides an explicit construction of an exact encoding. Given a Stinespring isometry $V$ for the channel, find the kernel of $C_0$ and choose any unit vector $\ket e$ in it. The encoding is then
\[
W\ket\psi=V^\dagger(\ket\psi\otimes\ket e).
\]

The proof uses $d^2$ binary measurements, one of which varies with a parameter. This number arises because the MIC spans the entire space of Hermitian operators. Whether fewer measurements suffice for this criterion remains an interesting question. This would help clarify the relationship between compatibility and the construction of encodings in greater detail. The stability bound obtained here may also be improved.

These questions concern the quantitative relationship between measurement information and quantum transmission. In terms of steering, if the channel admits the required noiseless encoding, every incompatible measurement family demonstrates steering with the corresponding shared state. Otherwise, some parameter values in our construction yield incompatible measurements that cannot demonstrate steering with that state. Thus, examining the steering threshold of this finite measurement family on that state suffices to determine whether the required encoding exists. Measurement compatibility, quantum steering, and quantum error correction are thereby linked through the same threshold condition, suggesting further possible connections among them.

\begin{acknowledgments}
This work is supported by the Quantum Science and Technology--National Science and Technology Major Project (Grant No. 2024ZD0301000), and the National Natural Science Foundation of China (Grant No. 12275136).
\end{acknowledgments}

\end{document}